\documentclass{llncs}
\pagestyle{plain}

\usepackage{microtype}
\usepackage{amsmath}
\usepackage{amsfonts}
\usepackage{physics}
\usepackage{algorithm,multicol}
\usepackage{algpseudocode}

\bibliographystyle{plainurl}

\newcommand {\secu} {$\lambda$ }
\newcommand {\poly} {\textsc{poly}($\lambda$)}
\newcommand {\expo} {\textsc{exp}($\lambda$)}
\newcommand {\negl} {\textsc{negl}($\lambda$)}
\newcommand {\nonegl} {\textsc{non-negl}($\lambda$)}
\newcommand {\udist} {\mathcal{U} }
\newcommand {\setup} {\texttt{Setup} }
\newcommand {\kgen} {\texttt{KeyGen} }
\newcommand {\enc} {\texttt{Enc} }
\newcommand {\dec} {\texttt{Dec} }
\newcommand {\qenc} {$\mathcal{QE}$ }
\newcommand {\qdec} {$\mathcal{QD}$ }
\newcommand*\colvec[3][]{ \begin{bmatrix}\ifx\relax#1\relax\else#1\\\fi#2\\#3\end{bmatrix} }

\begin{document}

\title{A Quantum-Classical Scheme towards \\ Quantum Functional Encryption}
\author{Aditya Ahuja}
\institute{Department of Computer Science and Engineering, \\ 
  Indian Institute of Technology Delhi \\
  \texttt{aditya.ahuja@cse.iitd.ac.in}}

\maketitle

\begin{abstract}

Quantum encryption is a well studied problem for both classical and quantum information. However, little is known about quantum encryption schemes which enable the user, under different keys, to learn different functions of the plaintext, given the ciphertext. In this paper, we give a novel one-bit secret-key quantum encryption scheme, a classical extension of which allows different key holders to learn different length subsequences of the plaintext from the ciphertext. We prove our quantum-classical scheme secure under the notions of quantum semantic security, quantum entropic indistinguishability, and recent security definitions from the field of functional encryption.
\end{abstract}
\keywords{Quantum Encryption, Quantum Semantic Security, Quantum Entropic Indistinguishability, Functional Encryption}

\section{Introduction}

In a pioneering work, Boneh, Sahai and Waters formalized the notion of functional encryption in 2010 \cite{fe}. This generalization of an encryption scheme enabled users possessing different keys to learn different functions over the plaintext from the ciphertext. There have been many (classical) schemes proposed to realize different variants of functional encryption since the inception of the notion. However, despite it's power, there have been no equivalent formalizations of functional encryption for quantum information. \\
Over the last few years, formal definitions of quantum entropic security \cite{entsec} and quantum computational security \cite{compsec} have been introduced and accepted. Moreover, for classical functional encryption schemes, refined security definitions \cite{fesec} have been recently presented, which extend the notion of security given in the original paper \cite{fe}. \\
In this work, we first present a one-qubit secret-key quantum encryption scheme for classical information. The scheme is proven secure under quantum semantic security (Definition 8, \cite{compsec}) and quantum entropic indistinguishability (Definition 3, \cite{entsec}). The classical, functional extension of this scheme is then proven to be full-message private, full-function private (Definitions 2.4,3.2 \cite{fesec}) and weakly simulation-secure (Definition 5, \cite{fe}). Intuitively, the security of the quantum encryption scheme and it's extension are based upon distinguishing computationally between two different but uniformly distributed bits, which is a hard problem. Also the functional extension allows the user to learn different length subsequences of the message, with a different subsequence per instantiation of the scheme. \\
This paper is organized as follows. First, we present the syntax - the notations and definitions used to present our scheme, in the preliminaries section. The third section of the paper gives the construction and correctness arguments for our scheme. The proofs of security are given in the following section. We then give the operational aspects of the scheme in the discussion section. To catalogue other works in quantum encryption, we give next a section on related work. In the final section, we present our conclusions and future extensions possible for this work.
\section{Preliminaries}
\label{sec:prelim}

In this section, we define the syntax and the definitions used for presenting our cryptosystem and establishing it's correctness and security.

\subsection{Notation}
\label{subsec:notation}

Let \secu be the security parameter. Let \poly, \expo, \negl, \nonegl \text{ } denote the set of all polynomial, exponential, negligible, and non-negligible functions on \secu respectively. We will sometimes abuse notation and place these function classes in place of functions that belong to these classes. \\
\noindent Let $\{ \ket{0},\ket{1} \}$ be the computational basis for our QPT algorithms. For both our QPT and PPT algorithms, we will use \texttt{A$^O$} to denote that \texttt{A} has oracle access to \texttt{$O$}. We will use $\udist(V)$ to denote uniform distribution on universe $V$.

\subsection{Correctness Definitions}
\label{subsec:cordef}

We begin by defining the functionality to be realized and the structure of our scheme. These definitions are inspired from \cite{fe}, but modified appropriately.

\begin{definition}[Functionality]
A functionality $F$, given $a \in A$, defined over $(K,M)$ is a function $F_a: K \times M \rightarrow \{0, 1\}^*$ describable as a deterministic Turing Machine. We call $A$ the functionality-index space, $M$ the message space, and $K \cup \{ \aleph \}$ the key space. \\
We introduce $\aleph$ to allow $\forall a \in A, \forall m \in M, F_a(\aleph,m) = |m|$.
\end{definition}

\begin{definition}[Hybrid FE Scheme]
A hybrid Functional Encryption (hFE) scheme $(\Pi,\Xi)$ for a functionality $F$, given functionality-index $a$, defined over $(K,M)$, is a tuple of PPT algorithms $\Pi = (\setup,\kgen,\enc,\dec)$ and a secret-key quantum encryption scheme $\Xi = (\mathcal{QE},\mathcal{QD})$ where \enc and \dec have oracle access to \qenc and \qdec respectively. \\
\newpage
\noindent This scheme must satisfy the following correctness condition given $A, \forall k \in K, \forall m \in M$: \\
\vspace*{-10pt}
\begin{enumerate}
\item $msk \leftarrow$ \setup$(1^\lambda,A)$
\item $sk \leftarrow$ \texttt{KeyGen$_{msk}$}$(k)$
\item $\ket{c} \leftarrow$ \texttt{Enc$_{msk}^\mathcal{QE}$}$(m)$
\item $n \leftarrow$ \texttt{Dec$^\mathcal{QD}$}$(sk,\ket{c})$
\end{enumerate}
It is mandated that $msk$ contains $a$, and $n = F_a(k,m)$ with probability 1.
\end{definition}
Note that that hardwiring of the master-secret allows only oracle access to \kgen and \enc. This allows any user to encrypt any message $m$ via an oracle call to get the quantum cipher-text. However in our proofs of security the adversary has only oracle access to \kgen and \enc. Thus the above definition is limited but sound. \\

\subsection{Security Definitions}
\label{subsec:secdef}

We give definitions for quantum IND-secure encryptions, and quantum entropic indistinguishability first. Note that they are reproduced from their original sources.

\begin{definition}[IND-Security, Definition 7, \cite{compsec}]
\label{def:ind-sec}
A secret-key quantum scheme $(\mathcal{QE},\mathcal{QD})$ with secret $s$, has indistinguishable encryptions, or is IND-secure, if for every QPT adversary $(\mathcal{M},\mathcal{D})$, \\
$ \lvert Pr[ \mathcal{D} \{ (\mathcal{QE}_s \otimes \textbf{1}_E) (\rho_{ME}) \} = 1] - Pr[ \mathcal{D} \{ (\mathcal{QE}_s \otimes \textbf{1}_E) (\ket{0}\bra{0}_M \otimes \rho_E) \} = 1] \\ \le $ \negl. \\
where $\rho_{ME} \leftarrow \mathcal{M}(1^\lambda), \rho_E = \textbf{Tr}_M(\rho_{ME})$ and the probabilities are taken over the internal randomness of $\mathcal{QE,M,D}$.
\end{definition}

\begin{definition}[Entropic Indistinguishability, Definition 3, \cite{entsec}]
\label{def:ent-ind}
An encryption scheme with superoperator $\mathcal{E}$ is said to be $(t,\epsilon)$-indistinguishable if for all (density) operators $\rho$ such that $H_\infty(\rho) \geq t$ we have 
$ \lvert\lvert \mathcal{E}(\rho) - \frac{1}{d}\mathbb{I}\rvert\rvert_{tr} \le \epsilon $. Here $d$ is the size of the message space.
\end{definition}

\noindent We next give security definitions for classical functional ciphers adopted from \cite{fesec} in the context of our quantum-classical scheme. Note that superscripts are not exponentiations but indexes. Also 
$\forall b, \texttt{Enc}_{msk,b}(m^0,m^1) = \texttt{Enc}_{msk}(m^b),\\ \texttt{KeyGen}_{msk,b}(f^0,f^1) = \texttt{KeyGen}_{msk}(f^b)$.

\begin{definition}[Valid Message-Privacy Adversary, Definition 2.3, \cite{fesec}]
\label{def:vmpa}
A polynomial-time algorithm $\mathcal{A}$ is a valid message-privacy adversary if for all private-key functional encryption schemes $(\setup,\kgen,\enc,\dec)$ and for all $\lambda \in \mathbb{N}, b \in \{ 0,1 \}$ and for all $f$ and $(m^0,m^1)$ with which $\mathcal{A}$ queries oracles \kgen and $\texttt{Enc}_{msk,b}$ respectively, we have $f(m^0) = f(m^1)$.
\end{definition}

\begin{definition}[Full Message Privacy, Definition 2.4, \cite{fesec}]
\label{def:fmp}
A private-key functional encryption scheme $(\setup,\kgen,\enc,\dec)$ is fully message private if for any valid message-privacy adversary $\mathcal{A}$: \\
$ \lvert Pr[  \mathcal{A}^{\texttt{KeyGen}_{msk}(\cdot),\texttt{Enc}_{msk,0}(\cdot,\cdot)}(\lambda) = 1] - Pr[ \mathcal{A}^{\texttt{KeyGen}_{msk}(\cdot),\texttt{Enc}_{msk,1}(\cdot,\cdot)}(\lambda) = 1] \rvert \\ \le $ \negl.
\end{definition}

\begin{definition}[Valid Function-Privacy Adversary, Definition 3.1, \cite{fesec}]
\label{def:vfpa}
A polynomial-time algorithm $\mathcal{A}$ is a valid function-privacy adversary if for all private-key functional encryption schemes $(\setup,\kgen,\enc,\dec)$ and for all $\lambda \in \mathbb{N}, b \in \{ 0,1 \}$ and for all $(f^0,f^1)$ and $(m^0,m^1)$ with which $\mathcal{A}$ queries oracles $\texttt{KeyGen}_{msk,b}$ and $\texttt{Enc}_{msk,b}$ respectively, we have 
$|m^0| = |m^1|, |f^0| = |f^1|, \text{ and } f^0(m^0) = f^1(m^1)$ where $|\cdot|$ denotes the length of description.
\end{definition}

\begin{definition}[Full Function Privacy, Definition 3.2, \cite{fesec}]
\label{def:ffp}
A private-key functional encryption scheme $(\setup,\kgen,\enc,\dec)$ is fully message private if for any valid function-privacy adversary $\mathcal{A}$: \\
$ \lvert Pr[  \mathcal{A}^{\texttt{KeyGen}_{msk,0}(\cdot),\texttt{Enc}_{msk,0}(\cdot,\cdot)}(\lambda) = 1] - Pr[ \mathcal{A}^{\texttt{KeyGen}_{msk,1}(\cdot),\texttt{Enc}_{msk,1}(\cdot,\cdot)}(\lambda) = 1] \rvert \\ \le $ \negl.
\end{definition}

\noindent Finally we give a scheme security definition from the seminal work \cite{fe}. This has been updated to make the functionality index a part of the comparison (so that we are comparing the same function).

\begin{definition}[Weak Simulation Security, Definition 5, \cite{fe}]
\label{def:wss}
A functional encryption scheme $(\setup,\kgen,\enc,\dec)$ is weakly simulation-secure if for all polynomial-time algorithms (\texttt{Msg},\texttt{Adv}) there exists a polynomial-time algorithm \texttt{Sim} such that the distribution ensembles given in Algorithm \ref{alg:simsec} are computationally indistinguishable.
\end{definition}

\begin{algorithm}[h]
\caption{Weak-Simulation Security Game}
\label{alg:simsec}
  \begin{multicols}{2}
    \begin{algorithmic}[Real]
      \State $msk = a \leftarrow$ \setup$(1^\lambda,A)$
      \State $(\vec m, \tau) \leftarrow \texttt{Msg}(1^\lambda)$
      \State $\ket{\vec c} \leftarrow \texttt{Enc}_{msk}(\vec m)$ via an oracle call
      \State $\alpha \leftarrow \texttt{Adv}^{\texttt{KeyGen}_{msk}(\cdot)}(\ket{\vec c}, \tau)$
      \State Let $(y_1,y_2,...,y_l)$ be the \texttt{Adv} queries
      \State Output real dist. $(a,\vec m, \tau, \alpha, y_1, ..., y_l)$
    \end{algorithmic}
    \columnbreak
    \begin{algorithmic}[Ideal]
     \State Give Functionality-Index $a$ from $A$.
     \State $(\vec m, \tau) \leftarrow \texttt{Msg}(1^\lambda)$
     \State $\alpha \leftarrow \texttt{Sim}^{F_a(\cdot, \vec m)}(1^\lambda, \tau, F_a(\aleph,\vec m))$
     \State Let $(y_1,y_2,...,y_l)$ be the \texttt{Sim} queries to $F_a$
     \State Output ideal dist. $(a,\vec m, \tau, \alpha, y_1, ..., y_l)$
    \end{algorithmic}
  \end{multicols}
\end{algorithm}
\section{A Quantum-Classical Construction}
\label{sec:scheme}

\subsection{The One-Qubit Quantum Cryptosystem}
\label{sec:xi}
We introduce our novel single qubit secret-key quantum cipher $\Xi$ in Algorithm \ref{alg:xi}. Note that the secret key is given by the enclosing hFE cipher. The scheme is proven correct in Theorem \ref{thm:xi-correct} under the appropriate unitary map.

\begin{algorithm}[h]
\caption{The Scheme $\Xi = (\mathcal{QE},\mathcal{QD})$, given the secret-key bit $s$, Bloch-sphere equatorial-position $\theta$, message-bit $b$}
\label{alg:xi}
\begin{algorithmic}[1]

\Procedure{$\mathcal{QE}_{s,\theta}(b)$}{}
\State Prepare $\ket{s}$ from $s$ and $\ket{b}$ from $b$.
\State Sample bit $r \Leftarrow \udist(\{0,1\})$.
\State Return $(\ket{c_0},\ket{c_1}) = (\mathcal{H}^\theta_r\ket{s},\mathcal{H}^\theta_r\ket{b})$.
\EndProcedure

\Procedure{$\mathcal{QD}_{s,\theta}(\ket{c_0},\ket{c_1})$}{}
\State Obtain $\ket{r} = (\mathcal{H}^\theta_s)^\dagger\ket{c_0}$.
\State Prepare $r$ from $\ket{r}$ (by measurement, with probability 1).
\State Obtain $\ket{b} = (\mathcal{H}^\theta_r)^\dagger\ket{c_1}$.
\State Prepare $b$ from $\ket{b}$ (by measurement, with probability 1).
\State Return $b$.
\EndProcedure

\end{algorithmic}
\end{algorithm}

\begin{theorem} [The Quantum Encryption Unitary Map and $\Xi$ Correctness]
\label{thm:xi-correct}
\[ \text{Let } \mathcal{H}^\theta_u := \frac{1}{\sqrt{2}}
\begin{bmatrix}
    1 & 1 \\
    (-1)^u e^{i\theta} & (-1)^{u+1} e^{i\theta} \\
\end{bmatrix}
\text{be the unitary map in scheme $\Xi$.} \]
\noindent Then $\forall \theta,u,v, \mathcal{H}^\theta_u\ket{v} = \mathcal{H}^\theta_v\ket{u}$, and scheme $\Xi$ (in Algorithm \ref{alg:xi}) is correct.
\end{theorem}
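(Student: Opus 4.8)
The plan is to first establish the stated symmetry $\mathcal{H}^\theta_u\ket{v} = \mathcal{H}^\theta_v\ket{u}$ by a short computation on the two computational-basis inputs, and then to use that identity to commute the key-unitary past the randomness-unitary inside $\mathcal{QD}$, so that both stages of decryption collapse to $U^\dagger U = \mathbb{I}$.

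First I would compute $\mathcal{H}^\theta_u\ket{v}$ for $v\in\{0,1\}$: it is simply the $(v{+}1)$-th column of the matrix, namely $\tfrac{1}{\sqrt{2}}\bigl(1,\ (-1)^{u+v}e^{i\theta}\bigr)^{\!\top}$, since the factor $(-1)^u$ in row two becomes $(-1)^{u+1}$ exactly when $v=1$, i.e.\ the row-two entry is $(-1)^{u+v}e^{i\theta}$ in both cases. Because the top entry is independent of $u,v$ and the bottom entry depends only on the symmetric quantity $u+v$, the identity $\mathcal{H}^\theta_u\ket{v}=\mathcal{H}^\theta_v\ket{u}$ is immediate. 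In the same computation I would record that the two columns are orthonormal, so each $\mathcal{H}^\theta_u$ is unitary and $(\mathcal{H}^\theta_u)^\dagger\mathcal{H}^\theta_u=\mathbb{I}$; this is what legitimises calling it ``the unitary map''.

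For correctness of $\Xi$ I would trace Algorithm \ref{alg:xi} directly. Encryption produces $\ket{c_0}=\mathcal{H}^\theta_r\ket{s}$ and $\ket{c_1}=\mathcal{H}^\theta_r\ket{b}$ for a uniform $r$. The first decryption step forms $(\mathcal{H}^\theta_s)^\dagger\ket{c_0}=(\mathcal{H}^\theta_s)^\dagger\mathcal{H}^\theta_r\ket{s}$; rewriting $\mathcal{H}^\theta_r\ket{s}=\mathcal{H}^\theta_s\ket{r}$ by the symmetry identity turns this into $(\mathcal{H}^\theta_s)^\dagger\mathcal{H}^\theta_s\ket{r}=\ket{r}$, a computational-basis state, so measuring it yields $r$ with probability $1$. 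With $r$ recovered, the second step forms $(\mathcal{H}^\theta_r)^\dagger\ket{c_1}=(\mathcal{H}^\theta_r)^\dagger\mathcal{H}^\theta_r\ket{b}=\ket{b}$, so the final measurement returns $b$ with probability $1$. Hence $\mathcal{QD}_{s,\theta}(\mathcal{QE}_{s,\theta}(b))=b$ deterministically, which is the required correctness condition.

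There is no genuine obstacle here; the one point needing care is that the cancellation in the first decryption step works only because the honest key $s$ is used, and only after the $u\leftrightarrow v$ swap — without the symmetry lemma the unitaries $(\mathcal{H}^\theta_s)^\dagger$ and $\mathcal{H}^\theta_r$ carry mismatched subscripts and do not compose to the identity. I would therefore state the lemma explicitly and invoke it precisely at that step; the remainder is routine single-qubit linear algebra.
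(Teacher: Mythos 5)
Your proposal is correct and follows essentially the same route as the paper: verify unitarity of $\mathcal{H}^\theta_u$, establish the symmetry $\mathcal{H}^\theta_u\ket{v}=\tfrac{1}{\sqrt{2}}\bigl(1,(-1)^{u+v}e^{i\theta}\bigr)^{\!\top}=\mathcal{H}^\theta_v\ket{u}$ by reading off the columns, and then use that swap in the first decryption step so that $(\mathcal{H}^\theta_s)^\dagger\mathcal{H}^\theta_s\ket{r}=\ket{r}$, after which recovering $b$ is immediate. The only cosmetic difference is your exponent $(-1)^{u+v}$ versus the paper's $(-1)^{u\oplus v}$, which agree since $u+v$ and $u\oplus v$ have the same parity.
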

\begin{proof}
First, it is an easy verification that 
$\forall \theta,u, (\mathcal{H}^\theta_u)^\dagger\mathcal{H}^\theta_u = \mathcal{H}^\theta_u(\mathcal{H}^\theta_u)^\dagger = \mathbb{I}$. 
Next we have $\mathcal{H}^\theta_u \ket{v} = \frac{1}{\sqrt{2}} \colvec{1}{(-1)^{u \oplus v}e^{i\theta}} = \mathcal{H}^\theta_v \ket{u}$. Finally, to show that the scheme is correct, \\
$\mathcal{QD}_{s,\theta}(\mathcal{QE}_{s,\theta}(b)) = \mathcal{QD}_{s,\theta}(\mathcal{H}^\theta_r\ket{s},\mathcal{H}^\theta_r\ket{b})$
$ = ((\mathcal{H}^\theta_s)^\dagger\mathcal{H}^\theta_r\ket{s},(\mathcal{H}^\theta_?)^\dagger\mathcal{H}^\theta_r\ket{b})$ \\
$ = ((\mathcal{H}^\theta_s)^\dagger\mathcal{H}^\theta_s\ket{r},(\mathcal{H}^\theta_?)^\dagger\mathcal{H}^\theta_r\ket{b})$
$ = (\ket{r},(\mathcal{H}^\theta_?)^\dagger\mathcal{H}^\theta_r\ket{b})$ \\
$ = (\mathcal{H}^\theta_r)^\dagger\mathcal{H}^\theta_r\ket{b}  = \ket{b} = b $ with probability 1, \\
as we obtain the inverting map of $\mathcal{H}^\theta_r\ket{b}$ after recovering $r$.
\end{proof}

\noindent Note that encrypting the randomness under the secret is equivalent to \emph{encrypting the secret under the randomness } - a unique property of the unitary map $\mathcal{H}^\theta_u$.

\subsection{The hFE Scheme}
\label{sec:pi}

Our message (plaintext) length $|m|$ will be greater that or equal to the security parameter. Furthermore, $|m| = Q \in$ \poly. 
We will use $\Sigma = \{ \sigma:\{0,1\}^\lambda \rightarrow \{0,1\}^Q | \text{ } \sigma \text{ is injective} \}$ as part of the functionality index space. Note $|\Sigma| \in $ \expo. Our scheme is given in Algorithm \ref{alg:pi}.

\begin{algorithm}[h]
\caption{The Scheme $\Pi = (\setup,\kgen,\enc,\dec)$, given $k \in \{0,1\}^\lambda, m \in \{0,1\}^Q$, oracle access to $\Xi = (\mathcal{QE},\mathcal{QD})$}
\label{alg:pi}

\begin{algorithmic}

\Procedure{$\setup$}{$1^\lambda,\Sigma$}
\State Sample $a = (\sigma,\kappa_Q) \Leftarrow \udist(\Sigma,\{0,1\}^\lambda)$
\State Return $msk = a$
\EndProcedure

\Procedure{$\kgen_a$}{$k$}
\State Let $\sigma(\kappa_Q) = (s_1,s_2,...,s_Q)$
\State Obtain $\delta \leftarrow \sigma(\kappa_Q) - \sigma(k)$.
\State If $\exists q \in [Q]$ such that $\delta = -\frac{Q-q+1}{2}$ or $\delta = \frac{Q-q}{2}$ \\ return $sk = (s_1,s_2,...,s_q,\bot,\bot,...,\bot)$.
\State Otherwise return $sk = \vec\bot$.
\EndProcedure

\Procedure{$\enc^{\mathcal{QE}}_a$}{$m$}
\State Let $\sigma(\kappa_Q) = (s_1,s_2,...,s_Q)$
\State $\forall j \in [Q], (\ket{c_{j,0}}, \ket{c_{j,1}}) \leftarrow \mathcal{QE}_{s_j,\theta_j}(m_j)$ where $\theta_j = \frac{2 \pi j}{Q}$
\State Return $\ket{\vec c} = (\ket{c_{j,0}}, \ket{c_{j,1}})_{j \in [Q]}$
\EndProcedure

\Procedure{$\dec^{\mathcal{QD}}$}{$sk,\ket{\vec c}$}
\State Let $sk = (s_1,s_2,...,s_q,\bot, \bot, ..., \bot)$ 
\State $\forall j \in [q],  m_j \leftarrow \mathcal{QD}_{s_j,\theta_j}(\ket{c_{j,0}}, \ket{c_{j,1}})$ where $\theta_j = \frac{2 \pi j}{Q}$
\State Return $m_1m_2m_3...m_q$
\EndProcedure

\end{algorithmic}
\end{algorithm}

\noindent We will also use the following nomenclature for our correctness and security proofs.

\begin{definition}
\label{def:key-f}
We say that an arbitrary key $k = \kappa_q$ if there exists (exactly one) $q \in [Q]$ such that $\sigma(\kappa_Q) - \sigma(k) = -\frac{Q-q+1}{2}$ or 
$\sigma(\kappa_Q) - \sigma(k) = \frac{Q-q}{2}$. Also, the function $f_{\kappa_q}$ induced by key $\kappa_q$ is $f_{\kappa_q}(m) = m_1m_2...m_q$.
\end{definition}
\noindent Intuitively, $\kappa_q$ reveals the first $q$-bits of the message, and there is exactly one such $\kappa_q$.

\subsubsection*{Correctness and Positional Secrecy} 
Given for a message $m$, a particular user has key $\kappa_q$, \\
$\dec^{\mathcal{QD}}(\kgen_a(\kappa_q),\enc^{\mathcal{QE}}_a(m)) 
= \dec^{\mathcal{QD}}(s_1s_2...s_q, (\mathcal{QE}_{s_j,\theta_j}(m_j))_{j \in [Q]}) \\
= m_1m_2...m_q$ with probability 1. \\
The different positions of the cipher-qubit on the Bloch sphere equator allow recovery of different subsequences of the message, a notion we call positional secrecy. More formally, let $\eta:\{0,1\}^Q \rightarrow \{0,1\}^Q$ be a permutation that is pre-decided, or optionally is the output of \setup. Then we use $\mathcal{QE}$ to encrypt $m_{\eta(j)}$ under $s_j,\theta_j$. This results in key $\kappa_q$ recovering the $q$-subsequence induced by $\eta$ (rearranging the message bits so that the indices are in order): \\
$\dec^{\mathcal{QD}}(\kgen_a(\kappa_q),\enc^{\mathcal{QE}}_a(m)) 
= \dec^{\mathcal{QD}}(s_1s_2...s_q, (\mathcal{QE}_{s_j,\theta_j}(m_{\eta(j)}))_{j \in [Q]}) \\
= m_{\eta(1)}m_{\eta(2)}...m_{\eta(q)}$ with probability 1. \\

\section{Scheme Security}
\label{sec:security}

We will use the following corollary for our security proofs in this section.
\begin{corollary}
\label{cor:h-zero}
If $\mathcal{H}^\theta_u$ is as defined in Theorem \ref{thm:xi-correct}, then $\forall u,b,\theta \hspace*{5pt} \mathcal{H}^\theta_u\ket{b} = \mathcal{H}^\theta_{u \oplus b}\ket{0}$.
\end{corollary}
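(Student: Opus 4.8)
The plan is to reduce the claim directly to the closed-form action of $\mathcal{H}^\theta_u$ on computational-basis states that was already extracted in the proof of Theorem~\ref{thm:xi-correct}. Recall that there we computed $\mathcal{H}^\theta_u\ket{v} = \frac{1}{\sqrt{2}}\colvec{1}{(-1)^{u \oplus v}e^{i\theta}}$ for all choices of $\theta, u, v$. First I would instantiate this identity with $v = b$ to write the left-hand side of the corollary as $\mathcal{H}^\theta_u\ket{b} = \frac{1}{\sqrt{2}}\colvec{1}{(-1)^{u \oplus b}e^{i\theta}}$.

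Next I would instantiate the \emph{same} identity, this time with index $u \oplus b$ and basis state $\ket{0}$, obtaining $\mathcal{H}^\theta_{u \oplus b}\ket{0} = \frac{1}{\sqrt{2}}\colvec{1}{(-1)^{(u \oplus b) \oplus 0}e^{i\theta}}$. Since $0$ is the identity for $\oplus$, the exponent $(u \oplus b) \oplus 0$ collapses to $u \oplus b$, so the two column vectors are literally identical entry by entry; this establishes $\mathcal{H}^\theta_u\ket{b} = \mathcal{H}^\theta_{u \oplus b}\ket{0}$.

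There is essentially no obstacle: the only things to watch are the bookkeeping of the $\mathbb{Z}_2$ arithmetic in the phase exponent (associativity of $\oplus$ and $x \oplus 0 = x$) and the observation that the continuous parameter $\theta$ is inert under this index manipulation, so the equality holds for every $\theta$. As an alternative route one could invoke the symmetry $\mathcal{H}^\theta_u\ket{v} = \mathcal{H}^\theta_v\ket{u}$ from Theorem~\ref{thm:xi-correct}, rewriting $\mathcal{H}^\theta_u\ket{b} = \mathcal{H}^\theta_b\ket{u}$ and $\mathcal{H}^\theta_{u\oplus b}\ket{0} = \mathcal{H}^\theta_0\ket{u\oplus b}$ and comparing, but the direct substitution into the closed form is the shortest path and is the one I would present.
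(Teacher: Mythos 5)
Your proposal is correct and follows exactly the route the paper intends: the corollary is stated without an explicit proof precisely because it is an immediate consequence of the closed form $\mathcal{H}^\theta_u\ket{v} = \frac{1}{\sqrt{2}}\colvec{1}{(-1)^{u \oplus v}e^{i\theta}}$ computed in the proof of Theorem~\ref{thm:xi-correct}, which depends on $u$ and $v$ only through $u \oplus v$. Your substitution of $v = b$ versus index $u \oplus b$ with $\ket{0}$, together with $x \oplus 0 = x$, is exactly that observation made explicit.
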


\subsection{Security of the Quantum Encryption Scheme}
\label{sec:quan-sec}

We first note that (before prepending with the randomness) $|\mathcal{QE}_{s,\theta}(b)| = 2|b|$ and so the core-function based impossibility result (Definition 6.1,6.2 Theorem 6.3, \cite{semsec}) does not apply to $\Xi$.

\begin{theorem}[Semantic Security]
\label{thm:xi-semsecure}
The one-qubit secret-key encryption scheme $\Xi$ is quantum semantically secure.
\end{theorem}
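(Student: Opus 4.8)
The plan is to deduce quantum semantic security from quantum indistinguishability, and then to establish the latter directly — in fact to show that $\Xi$ is \emph{perfectly} indistinguishable, so that the ``\negl'' slack in Definition \ref{def:ind-sec} is never needed. First I would invoke the equivalence between IND-security (Definition \ref{def:ind-sec}) and semantic security (Definition 8 of \cite{compsec}) proved in \cite{compsec}: since that equivalence holds for any secret-key quantum encryption scheme meeting the relevant syntax, and $\Xi$ does, it suffices to prove that $\Xi$ is IND-secure with the secret bit sampled uniformly, $s\Leftarrow\udist(\{0,1\})$.

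To prove IND-security I would compute the density operator of a fresh ciphertext. By Theorem \ref{thm:xi-correct} and Corollary \ref{cor:h-zero}, the ciphertext of a bit $b$ under $(s,\theta)$ with internal randomness $r$ is $(\mathcal{H}^\theta_r\ket{s},\mathcal{H}^\theta_r\ket{b}) = (\mathcal{H}^\theta_{r\oplus s}\ket{0},\,\mathcal{H}^\theta_{r\oplus b}\ket{0})$. The two states $\mathcal{H}^\theta_0\ket{0}$ and $\mathcal{H}^\theta_1\ket{0}$ equal $\tfrac{1}{\sqrt 2}(\ket{0}\pm e^{i\theta}\ket{1})$ and are therefore orthonormal, so writing $P_u = \mathcal{H}^\theta_u\ket{0}\bra{0}(\mathcal{H}^\theta_u)^\dagger$ we have $P_0+P_1=\mathbb{I}$. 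Averaging the ciphertext over the uniform $r$ gives $\tfrac12(P_s\otimes P_b + P_{s\oplus 1}\otimes P_{b\oplus 1})$, which depends on $s$ and $b$ only through $s\oplus b$; averaging further over the uniform key $s$ gives $\tfrac14(P_0+P_1)\otimes(P_0+P_1) = \tfrac14\mathbb{I}_4$, independent of $b$ (and of $\theta$). Hence $\mathcal{QE}$ sends every one-bit message to the maximally mixed two-qubit state: $\Xi$ is a quantum one-time pad. Consequently, for every adversary $(\mathcal{M},\mathcal{D})$ in Definition \ref{def:ind-sec}, the states $(\mathcal{QE}_s\otimes\mathbf{1}_E)(\rho_{ME})$ and $(\mathcal{QE}_s\otimes\mathbf{1}_E)(\ket{0}\bra{0}_M\otimes\rho_E)$ are the \emph{same} operator $\tfrac14\mathbb{I}_4\otimes\rho_E$ — the ciphertext factor is message-independent and the $E$-marginal is $\rho_E$ in both — so $\mathcal{D}$'s advantage is exactly $0 \le$ \negl. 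As a by-product this also gives the $(t,\epsilon)$-indistinguishability of Definition \ref{def:ent-ind} with $\epsilon = 0$.

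The points that need care, rather than any deep obstacle, are three. (i) The argument genuinely requires the secret bit to be freshly and uniformly random: for a fixed $s$ the encryptions $\mathcal{QE}_s(0)$ and $\mathcal{QE}_s(1)$ are the orthogonal operators $\tfrac12(P_0\otimes P_0+P_1\otimes P_1)$ and $\tfrac12(P_0\otimes P_1+P_1\otimes P_0)$ and are perfectly distinguishable, so I would make explicit that in Definition \ref{def:ind-sec} the probabilities are taken over the key as well — consistent with how $\Xi$ is deployed inside $\Pi$, where each $s_j$ is a coordinate of $\sigma(\kappa_Q)$ with $\kappa_Q\Leftarrow\udist(\{0,1\}^\lambda)$. (ii) Definition \ref{def:ind-sec} allows the message register $M$ to be entangled with $E$, whereas $\mathcal{QE}$ (the step ``Prepare $\ket{b}$ from $b$'') is written for a classical bit; I would note that this step effectively measures $M$ in the computational basis, so without loss of generality $\rho_{ME}=\sum_b p_b\ket{b}\bra{b}_M\otimes\rho_E^{(b)}$, and then linearity of the superoperator together with the message-independence just proved closes the case. (iii) Finally I would confirm that the direction of the equivalence used is the needed one, IND-security $\Rightarrow$ semantic security, which is the standard and easy direction.
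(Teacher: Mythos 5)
Your proposal is correct and reaches the theorem by a genuinely different route. Both arguments share the same outer structure --- reduce quantum semantic security to IND-security via the equivalence theorem of \cite{compsec}, then establish IND-security of $\Xi$ --- but they diverge on the inner step. The paper argues by reduction and contradiction: a QPT distinguisher $q\mathcal{D}$ between $\mathcal{QE}_{s,\theta}(\ket{b})$ and $\mathcal{QE}_{s,\theta}(\ket{0})$ is converted into a computational distinguisher \texttt{cD} between the classical values $r\oplus 0$ and $r\oplus b$ for a uniform hidden $r$, which is impossible since those two distributions are statistically identical. You instead compute the ciphertext density operator outright: with your projectors $P_u$, the ciphertext is $P_{r\oplus s}\otimes P_{r\oplus b}$, and averaging over uniform $r$ and uniform $s$ gives $\tfrac14\mathbb{I}_4$ for every $b$, so the distinguishing advantage is exactly zero. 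Your route buys two things. First, it is information-theoretic in one shot and handles the environment register and entangled inputs via your point (ii), which the paper does not discuss. Second, and more importantly, your observation (i) --- that for a \emph{fixed} $s$ the operators $\tfrac12(P_0\otimes P_0+P_1\otimes P_1)$ and $\tfrac12(P_0\otimes P_1+P_1\otimes P_0)$ have orthogonal supports and are perfectly distinguishable --- exposes a step the paper's reduction elides: the adversary also holds the first ciphertext qubit $\mathcal{H}^\theta_{r\oplus s}\ket{0}$, which is side information about $r$, so the challenge $c^*$ is not seen in isolation, and the claim that $q\mathcal{D}$'s input is independent of $b$ genuinely requires averaging over a uniform secret $s$ and not merely over $r$. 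Your direct calculation supplies exactly that averaging, so your version is arguably the more complete argument; your insistence that Definition \ref{def:ind-sec} be read with the key sampled uniformly (as in the original Definition 7 of \cite{compsec}) is not pedantry but a necessity. One caution on your closing aside: your claim of $(t,0)$-entropic indistinguishability concerns the channel averaged over $r$ (and $s$) acting on the full two-qubit ciphertext, whereas Theorem \ref{thm:xi-entindist} analyzes the message qubit conjugated by a fixed $\mathcal{H}^\theta_r$ and obtains a nonzero $\epsilon$; the two statements concern different superoperators and do not contradict each other, but you should state explicitly which channel you mean.
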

\begin{proof}
First, we prove that $\Xi = (\mathcal{QE},\mathcal{QD})$ is IND-secure under Definition \ref{def:ind-sec}. Let's say there exists a QPT distinguisher $q\mathcal{D}$ that can distinguish between $\mathcal{QE}_{s,\theta}(\ket{b})$ and $\mathcal{QE}_{s,\theta}(\ket{0})$ in time \poly \text{ } with a probability \nonegl. Consider the following attack where \texttt{cD} is a distinguisher between $r \oplus 0$ and $r \oplus b$, given $s$, $\theta$ and a uniformly random $r$:
\begin{enumerate}
\item The message adversary chooses $b$.
\item Sample $r \Leftarrow \udist(\{0,1\})$.
\item Prepare $\mathcal{H}^\theta_{r \oplus s}\ket{0}$ and send it to $q\mathcal{D}$ as it's first argument.
\item Choose a challenge $c^*$ uniformly between $r \oplus 0$ and $r \oplus b$ and send it to \texttt{cD}.
\item \texttt{cD} prepares $\mathcal{H}^\theta_{c^*}\ket{0}$ and sends it to $q\mathcal{D}$ as it's second argument.
\item $q\mathcal{D}$ now has one of the encryptions between $\mathcal{QE}_{s,\theta}(\ket{0})$ and $\mathcal{QE}_{s,\theta}(\ket{b})$.
\item $q\mathcal{D}$ distinguishes between the two possible encryptions and sends it's decision, $0$ or $b$, to \texttt{cD} (in time \poly \text{ } with a probability \nonegl).
\item \texttt{cD}, for the challenge $c^*$ outputs it's decision ($0$ or $b$), the same as the result given by $q\mathcal{D}$.
\end{enumerate}
Thus we have constructed a computational distinguisher \texttt{cD} between two statistically identical distributions $r \oplus 0$ and $r \oplus b$, even for a $b$ different than $0$, a contradiction. So $q\mathcal{D}$ cannot exist (step 7 cannot happen) and $\Xi$ is IND-secure. \\
We next use Theorem 9 from \cite{compsec} to conclude that $\Xi$ is quantum semantic secure. 
\end{proof}

\begin{theorem}[Entropic Indistinguishability]
\label{thm:xi-entindist}
The one-qubit secret-key encryption scheme $\Xi$ is entropically $(t,\frac{1}{2}(2^{1-t}-1))$-indistinguishable for min-entropy $t \in [0,1]$.
\end{theorem}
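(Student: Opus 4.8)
The plan is to read off the encryption superoperator that acts on the one-qubit message, write it in closed form, and then optimise a single real parameter over all density operators of min-entropy at least $t$. By Corollary~\ref{cor:h-zero} the message-carrying component of $\mathcal{QE}_{s,\theta}(b)$ is $\mathcal{H}^\theta_r\ket{b}=\mathcal{H}^\theta_{r\oplus b}\ket{0}$, and since $r\Leftarrow\udist(\{0,1\})$ the bit $r\oplus b$ is uniform; generalising the pure input $\ket{b}$ to an arbitrary message density operator $\rho$, the scheme's encryption channel on the message is $\mathcal{E}(\rho)=\tfrac12\bigl(\mathcal{H}^\theta_0\,\rho\,(\mathcal{H}^\theta_0)^\dagger+\mathcal{H}^\theta_1\,\rho\,(\mathcal{H}^\theta_1)^\dagger\bigr)$. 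The message space has dimension $d=2$, so the reference state in Definition~\ref{def:ent-ind} is $\tfrac12\mathbb{I}$. (Including the key-ciphertext component $\mathcal{H}^\theta_r\ket{s}$ and averaging over the uniform secret $s$ only tensors $\mathcal{E}(\rho)$ with a maximally mixed qubit and changes nothing below, so it suffices to analyse $\mathcal{E}$.)

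First I would put $\mathcal{E}$ in closed form. From the explicit matrix of $\mathcal{H}^\theta_u$ in Theorem~\ref{thm:xi-correct} one checks directly that $\mathcal{H}^\theta_1=Z\,\mathcal{H}^\theta_0$ with $Z=\operatorname{diag}(1,-1)$, and that $(\mathcal{H}^\theta_0)^\dagger\ket{0}=\ket{+}$ and $(\mathcal{H}^\theta_0)^\dagger\ket{1}=e^{-i\theta}\ket{-}$ for every $\theta$. Using the identity $\tfrac12\bigl(M+ZMZ\bigr)=\operatorname{diag}(M_{00},M_{11})$ for any $2\times 2$ matrix $M$, taken with $M=\mathcal{H}^\theta_0\,\rho\,(\mathcal{H}^\theta_0)^\dagger$, this gives $\mathcal{E}(\rho)=\bra{+}\rho\ket{+}\,\ket{0}\bra{0}+\bra{-}\rho\ket{-}\,\ket{1}\bra{1}$: the channel destroys all coherences and retains only the $\ket{+}$-versus-$\ket{-}$ population, independently of $\theta$.

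Next I would optimise under the min-entropy constraint. Set $a:=\bra{+}\rho\ket{+}\in[0,1]$, so that $\mathcal{E}(\rho)-\tfrac12\mathbb{I}=(a-\tfrac12)\bigl(\ket{0}\bra{0}-\ket{1}\bra{1}\bigr)$ and hence $\lVert\mathcal{E}(\rho)-\tfrac12\mathbb{I}\rVert_{tr}=\abs{a-\tfrac12}$, reading $\lVert\cdot\rVert_{tr}$ as the trace distance (half the trace norm), as in \cite{entsec}. Diagonalising $\rho$ with eigenvalues $p=\lambda_{\max}(\rho)\ge\tfrac12$ and $1-p$ and top eigenvector $\ket{v}$, and writing $x:=\abs{\braket{+}{v}}^2\in[0,1]$, we get $a=px+(1-p)(1-x)=(1-p)+(2p-1)x$, which sweeps exactly the interval $[\,1-p,\,p\,]$ as $x$ ranges over $[0,1]$, so $\abs{a-\tfrac12}\le p-\tfrac12$. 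Since $H_\infty(\rho)\ge t$ is precisely $p\le 2^{-t}$, we obtain $\lVert\mathcal{E}(\rho)-\tfrac12\mathbb{I}\rVert_{tr}\le 2^{-t}-\tfrac12=\tfrac12\bigl(2^{1-t}-1\bigr)$, with equality for $\rho=2^{-t}\ket{+}\bra{+}+(1-2^{-t})\ket{-}\bra{-}$, so the bound is tight for every $t\in[0,1]$.

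I expect no genuinely hard analytic step: once the closed form of the second paragraph is in hand, the argument is a one-variable estimate. The two points that need care are (i) pinning down the correct superoperator $\mathcal{E}$, namely the randomness-averaged action on the message qubit, whose closed form is exactly where the self-duality $\mathcal{H}^\theta_u\ket{v}=\mathcal{H}^\theta_v\ket{u}$ of Theorem~\ref{thm:xi-correct} gets used; and (ii) fixing the normalisation of $\lVert\cdot\rVert_{tr}$ so that the bound comes out as $\tfrac12(2^{1-t}-1)$ rather than $2^{1-t}-1$ — with the trace-distance convention the statement is exactly tight, as the worst-case $\rho$ above witnesses.
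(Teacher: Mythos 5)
Your proof is correct, and it reaches the stated bound by a genuinely different route from the paper's. The paper takes the superoperator to be conjugation by the single unitary $\mathcal{H}^\theta_r$ (observing that the resulting trace distance does not depend on $r$), restricts $\rho$ to the diagonal, classical operators $\gamma_0\ket{0}\bra{0}+\gamma_1\ket{1}\bra{1}$, and computes $\lVert\mathcal{E}(\rho)-\tfrac12\mathbb{I}\rVert_{tr}=\tfrac12\lvert\gamma_0-\gamma_1\rvert=\tfrac12(2^{1-t}-1)$ directly from the matrix square root; since a fixed unitary preserves eigenvalues, this is really the distance of the \emph{input} distribution from uniform. You instead average over the randomness $r$, which collapses the scheme to the pinching channel $\mathcal{E}(\rho)=\bra{+}\rho\ket{+}\,\ket{0}\bra{0}+\bra{-}\rho\ket{-}\,\ket{1}\bra{1}$, and then optimize over \emph{all} density operators of min-entropy at least $t$. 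Note that the two superoperators are genuinely different objects: on the paper's diagonal inputs your averaged channel outputs exactly $\tfrac12\mathbb{I}$, so the worst cases merely happen to coincide at $\tfrac12(2^{1-t}-1)$ --- yours attained on states diagonal in the $\{\ket{+},\ket{-}\}$ basis, the paper's on states diagonal in the computational basis. Your reading of ``the superoperator of the scheme'' as the randomness-averaged channel is the more standard one for Definition~\ref{def:ent-ind}, and your argument buys generality (arbitrary $\rho$ rather than only classical ones), a clean justification for discarding the key-carrying qubit, and an explicit tightness witness; the paper's computation is shorter but proves the bound only for the per-randomness conjugation acting on classical message distributions.
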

\begin{proof}
Here we prove entropic-indistinguishability (Definition \ref{def:ent-ind}) of the message qubit under $\Xi$. Since the secret qubit comes from the uniform distribution, it is perfectly indistinguishable (under the same superoperator) and it's security proof is thus implied. \\
Let $\rho = \sum_{j \in \{0,1\}} \gamma_j \ket{j}\bra{j}$ be the operator corresponding to the message qubit. The only associated interpretations of $\rho$ are classical, according to our scheme. Let $\mathcal{E}$ be the superoperator corresponding to our unitary map. So 
$\mathcal{E}(\rho) := \mathcal{H}^\theta_r \rho (\mathcal{H}^\theta_r)^\dagger$ for uniformly random $r$ and the message space distribution has min-entropy $t = - \log(\max\{\gamma_0,\gamma_1\}) \in [0,1]$. \\
Firstly, it's an easy verification that 
\[ \mathcal{E}(\rho) := \mathcal{H}^\theta_r \rho (\mathcal{H}^\theta_r)^\dagger = \frac{1}{2}
\begin{bmatrix}
    1 & (\gamma_0 - \gamma_1)(-1)^re^{i\theta} \\
    (\gamma_0 - \gamma_1)(-1)^re^{i\theta} & 1 \\
\end{bmatrix} \]
\[ \text{Now, } \lvert\lvert \mathcal{E}(\rho) - \frac{1}{2}\mathbb{I}\rvert\rvert_{tr} = 
\frac{1}{2} \texttt{Tr} \sqrt{(\mathcal{E}(\rho) - \frac{1}{2}\mathbb{I})^\dagger(\mathcal{E}(\rho) - \frac{1}{2}\mathbb{I})} 
= \frac{1}{2} \texttt{Tr}
\begin{bmatrix}
    \frac{\lvert \gamma_0 - \gamma_1\rvert}{2} & 0 \\
    0 & \frac{\lvert \gamma_0 - \gamma_1\rvert}{2}  \\
\end{bmatrix}
 \]
 
 \[ = \frac{\lvert \gamma_0 - \gamma_1\rvert}{2} =  \frac{\lvert 2 \times \max\{\gamma_0,\gamma_1\} - 1 \rvert}{2}  = \frac{\lvert 2 \times 2^{-t} - 1 \rvert}{2} =  \frac{1}{2}(2^{1-t}-1) \]
Thus $\Xi$ is $(t,\frac{1}{2}(2^{1-t}-1))$-indistinguishable.
\end{proof}

\subsection{Security of the hFE scheme}
\label{sec:hFE-sec}

Now we prove security of the classical extension of our quantum encryption using the definitions of privacy/security from the domain of classical functional encryption.

\begin{theorem}[Fully Message Private]
\label{thm:pixi-msgpriv}
The hFE scheme $(\Pi,\Xi)$ is fully message private (under Definition \ref{def:fmp}).
\end{theorem}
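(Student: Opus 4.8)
The plan is to prove indistinguishability of the two message oracles by a two-level hybrid: an outer hybrid over the adversary's encryption queries and an inner hybrid over the $Q$ ciphertext blocks, reducing each single-block switch to the IND-security of $\Xi$ proved in Theorem~\ref{thm:xi-semsecure}. Fix a valid message-privacy adversary $\mathcal{A}$ (Definition~\ref{def:vmpa}) and let $(m^{0,(i)},m^{1,(i)})_{i\in[t]}$ be its \enc queries, with $t$ polynomial in $\lambda$. Define $H_0,\dots,H_t$, where $H_i$ answers the first $i$ encryption queries using the $m^{1,(\cdot)}$ side and the rest using the $m^{0,(\cdot)}$ side; then $H_0$ and $H_t$ are exactly the two experiments of Definition~\ref{def:fmp}, and it suffices to bound $\lvert\Pr[\mathcal{A}^{H_{i-1}}=1]-\Pr[\mathcal{A}^{H_i}=1]\rvert$. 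Passing from $H_{i-1}$ to $H_i$ means re-encrypting the $i$-th answer from $m^{0,(i)}$ to $m^{1,(i)}$, which I would do block by block: only the coordinates $j$ with $m^{0,(i)}_j\neq m^{1,(i)}_j$ actually change, and each such change replaces the block $\mathcal{QE}_{s_j,\theta_j}(0)$ by $\mathcal{QE}_{s_j,\theta_j}(1)$ (or vice versa), with $\theta_j=2\pi j/Q$ a public parameter and all other blocks and all key answers untouched.

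The key structural fact I would establish first is why such a block switch is invisible. By Definition~\ref{def:key-f} the only non-trivial key $\kappa_q$ induces $f_{\kappa_q}(m)=m_1\cdots m_q$, and $\texttt{KeyGen}_a(\kappa_q)$ literally hands out $s_1,\dots,s_q$. Hence for a coordinate $j$ with $m^{0,(i)}_j\neq m^{1,(i)}_j$: if $\mathcal{A}$ ever obtained a key $\kappa_q$ with $q\ge j$ then validity would force $f_{\kappa_q}(m^{0,(i)})=f_{\kappa_q}(m^{1,(i)})$, contradicting the disagreement at $j\le q$; so no key held by $\mathcal{A}$ depends on $s_j$, and $s_j$ appears in none of the \kgen answers. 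Symmetrically, at any coordinate revealed by some held key, validity forces $m^{0,(i)}_j=m^{1,(i)}_j$, so that inner step is literally the identity and needs no argument. Thus in every inner step that is not trivially the identity, the secret $s_j$ of the block being changed is information-theoretically absent from the part of $\mathcal{A}$'s view produced by \kgen.

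To turn a distinguishable inner pair into a contradiction with Theorem~\ref{thm:xi-semsecure}, I would build an IND-adversary against $\Xi$ for the secret $s_j$: the reduction receives the IND challenge (an encryption of $\ket{0}$ versus an encryption of a message qubit $\ket{1}$ under an unknown $s_j$), samples the rest of the master secret $(\sigma,\kappa_Q)$ with coordinate $j$ of $\sigma(\kappa_Q)$ identified with the challenge secret, answers all encryption queries except the $i$-th honestly, splices the challenge block into coordinate $j$ of the $i$-th answer, answers \kgen queries as usual, and outputs $\mathcal{A}$'s guess bit; Corollary~\ref{cor:h-zero} is what lets it present the honest blocks in the form that $\mathcal{QE}$ outputs. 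Summing over the $O(tQ)$ inner steps then gives the \negl\ bound claimed in the theorem.

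The step I expect to be the real obstacle is exactly this last reduction, because at a fixed coordinate $j$ the \emph{same} secret $s_j$ is reused across all $t$ encryption answers, whereas the single-shot IND game of Definition~\ref{def:ind-sec} supplies only one challenge block under an unknown $s_j$ --- the reduction still has to produce the coordinate-$j$ blocks of the other $t-1$ encryption answers, which it cannot do without $s_j$. Worse, $s_j$ is a bit of the integer $\sigma(\kappa_Q)$ that \kgen compares against $\sigma(k)$, so even answering key queries faithfully appears to need $s_j$. Closing the proof therefore seems to require strictly more than Theorem~\ref{thm:xi-semsecure}: a many-ciphertext, key-structure-aware indistinguishability property for the composed scheme (and plausibly a modification of how $\enc$ invokes $\Xi$, drawing a fresh secret per encryption rather than reusing one). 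It is in reconciling this reuse, not in the hybrid bookkeeping or the validity analysis, that the difficulty is concentrated.
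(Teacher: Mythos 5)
Your hybrid-plus-reduction route is genuinely different from the paper's, and the obstacle you flag at the end is precisely why the paper does not take it. The paper never invokes Theorem~\ref{thm:xi-semsecure} (or any single-shot IND game for $\Xi$) in this proof. It argues directly and information-theoretically at a single coordinate: by validity, $m^0$ and $m^1$ agree on all positions up to $q^*=\max\{q:\mathcal{A}\text{ queries }\kappa_q\}$, so they differ at some $j^*>q^*$; by Corollary~\ref{cor:h-zero} the message cipher qubit at position $j^*$ is $\mathcal{H}^{\theta_{j^*}}_{r_{j^*}\oplus m^b_{j^*}}\ket{0}$, a deterministic function of the classical bit $r_{j^*}\oplus m^b_{j^*}$ with $r_{j^*}$ fresh and uniform; and the two distributions $r_{j^*}\oplus m^0_{j^*}$ and $r_{j^*}\oplus m^1_{j^*}$ are statistically identical, so a successful $\mathcal{A}$ would yield a computational distinguisher $\texttt{cD}_{j^*}$ between identical distributions --- a contradiction. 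The idea missing from your write-up is therefore not a stronger composable IND notion but the one-time-pad structure of $\mathcal{QE}$ itself: you need not embed an unknown $s_{j}$ from an external challenger at all. The simulator may know $s$ in full (it prepares the first cipher qubits $\mathcal{H}^{\theta_j}_{r_j\oplus s_j}\ket{0}$ itself and can answer \kgen queries honestly); the only unknown is the challenge bit $b$, which enters the view solely through the padded bit $r_{j^*}\oplus m^b_{j^*}$. This collapses both your inner and outer hybrids into one statistical statement and removes the need to simulate anything ``without $s_j$.''

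That said, your diagnosis of the reuse of $s_j$ is not a false alarm, and the paper's route does not dispose of it either: the paper's argument is written for a single pair $(m^0,m^1)$, whereas Definition~\ref{def:fmp} allows polynomially many encryption queries. Since $\{\mathcal{H}^{\theta_j}_0\ket{0},\mathcal{H}^{\theta_j}_1\ket{0}\}$ is an orthonormal basis and $\theta_j=2\pi j/Q$ is public, each block perfectly reveals the classical pair $(r_j\oplus s_j,\,r_j\oplus m_j)$ and hence the bit $s_j\oplus m_j$; two encryption queries then reveal the XOR of the corresponding message bits at every position, including positions beyond $q^*$, so a valid adversary making no key queries and two suitably chosen encryption queries distinguishes the oracles with probability $1$. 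So the difficulty you isolate is real and concentrated exactly where you say it is; but as a proof of the stated theorem your proposal is incomplete even in the single-query case (your reduction still insists on hiding $s_j$ from the simulator), whereas the paper's direct one-time-pad argument settles that case.
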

\begin{proof}
We start by observing that since the QPT adversary $\mathcal{A}$ is a valid message-privacy adversary (Definition \ref{def:vmpa}), the messages $m^0$ and $m^1$ agree on all bits $[q]$ such that $\mathcal{A}$ queries $\kappa_q$. Let $q^* = max \{q : \mathcal{A} \text{ queries } \kappa_q \}$. Then $\exists j^* \in \{ q^*+1, ..., Q \}$ such that $m^0_{j^*} = 1 - m^1_{j^*}$, otherwise the two encryption oracle calls are identical. \\
Now, let's say $\mathcal{A}$ can distinguish between $\texttt{Enc}_{a,0}(m^0,m^1)$ and $\texttt{Enc}_{a,1}(m^0,m^1)$ in time \poly with a probability \nonegl. Let $\texttt{cD}_j, \forall j \in [Q]$ be  distinguishers on a bit. Let $s \in \{ 0,1 \}^Q$ be the given secret and $\theta_j = \frac{2 \pi j}{Q}, \forall j \in [Q]$ be the given encryption angles. Now consider the following attack for a uniformly random $r \in \{ 0,1 \}^Q$: \\
\vspace*{-8pt}
\begin{enumerate}
\item The adversary chooses messages $m^0,m^1$.
\item Sample $r \Leftarrow \udist(\{0,1\}^Q)$.
\item Prepare $\forall j, \mathcal{H}^{\theta_j}_{r_j \oplus s_j}\ket{0}$ and send them to $\mathcal{A}$ in order as it's odd-position arguments.
\item Choose a challenge bit $b$ uniformly between $0$ and $1$.
\item Send challenge $\forall j, u_j = r_j \oplus m^b_j$ to $\texttt{cD}_j$.
\item Each $\texttt{cD}_j$ prepares $\mathcal{H}^{\theta_j}_{u_j}\ket{0}$ and sends it (in order) to $\mathcal{A}$ as it's even-position arguments.
\item $\mathcal{A}$ now has the encryption $\texttt{Enc}_a(m^b)$.
\item $\mathcal{A}$ outputs $b' = b$ sends it to $\texttt{cD}_j ,\forall j$ (in time \poly \text{ } with a probability \nonegl).
\item $\texttt{cD}_j$, for the challenge $u_j$ outputs it's decision $b'$.
\end{enumerate}
Thus we have constructed a computational distinguisher $\texttt{cD}_{j^*}$ between two statistically identical distributions $r \oplus m^b_{j^*}$ and $r \oplus m^{1-b}_{j^*}$ where 
$m^b_{j^*} = 1 - m^{1-b}_{j^*}$. This is a contradiction. So, $\mathcal{A}$ cannot exist (step 8 cannot happen with a non-negligible probability) and $(\Pi,\Xi)$ is fully message private.
\end{proof}

\begin{corollary}[Fully Function Private]
\label{cor:pixi-funcpriv}
The hFE scheme $(\Pi,\Xi)$ is fully function private (under Definition \ref{def:ffp}).
\end{corollary}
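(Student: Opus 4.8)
The plan is to deduce full function privacy from the full message privacy established in Theorem \ref{thm:pixi-msgpriv} by a black-box reduction, exploiting the rigidity of the key space of $\Pi$: by Definition \ref{def:key-f} (and the remark following it) there is exactly one key $\kappa_q$ per prefix length $q\in[Q]$, every key outside $\{\kappa_1,\dots,\kappa_Q\}$ decrypts to the empty string, and $\kgen_a$ is deterministic. The length conditions $|m^0|=|m^1|$ and $|f^0|=|f^1|$ of Definition \ref{def:vfpa} hold here for free, since all messages are $Q$ bits and all keys $\lambda$ bits.

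The crux is the following observation about any valid function-privacy adversary $\mathcal{A}$: on every key-query pair $(f^0,f^1)$ we have $\kgen_a(f^0)=\kgen_a(f^1)$, and moreover $F_a(f^0,\cdot)=F_a(f^1,\cdot)$. Indeed, the validity constraint $f^0(m^0)=f^1(m^1)$ forces these two strings---hence the output lengths of $F_a(f^0,\cdot)$ and $F_a(f^1,\cdot)$---to coincide; since $F_a(\kappa_q,\cdot)$ has output length exactly $q\ge 1$ while every other key has output length $0$, either $f^0$ and $f^1$ are both the unique key $\kappa_q$ for a common $q$ (so $f^0=f^1$), or neither is any $\kappa_q$ (so $\kgen_a(f^0)=\kgen_a(f^1)=\vec\bot$); in both cases the induced functionalities agree.

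Granting this, I would build from $\mathcal{A}$ a valid message-privacy adversary $\mathcal{B}$ (in the sense of Definition \ref{def:vmpa}): $\mathcal{B}$ runs $\mathcal{A}$, answers each key query $(f^0,f^1)$ by querying its own key-generation oracle on $f^0$---whose answer also equals $\kgen_a(f^1)$ and hence $\kgen_a(f^b)$, so that this perfectly emulates $\texttt{KeyGen}_{msk,b}$ for \emph{either} bit $b$---and forwards each encryption query $(m^0,m^1)$ verbatim. Then, for each $b$, $\mathcal{B}$ placed in the order-$b$ message-privacy game has exactly the view $\mathcal{A}$ has in the order-$b$ game of Definition \ref{def:ffp}, so it outputs $1$ with the same probability; and $\mathcal{B}$ is valid because for every $f^0$ and every $(m^0,m^1)$ it queries, $f^0(m^0)=f^1(m^1)=f^0(m^1)$ by the equality of functionalities just established. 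Applying Definition \ref{def:fmp} (Theorem \ref{thm:pixi-msgpriv}) to $\mathcal{B}$ then bounds $\mathcal{A}$'s distinguishing advantage by \negl, which is precisely Definition \ref{def:ffp}.

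I expect the main obstacle to be the crux observation, specifically the claim $\kgen_a(f^0)=\kgen_a(f^1)$ on \emph{every} key query: it leans on the uniqueness part of Definition \ref{def:key-f} (itself coming from the injectivity of $\sigma$ together with the parity dichotomy between $-\frac{Q-q+1}{2}$ and $\frac{Q-q}{2}$), and also on the convention that a valid adversary actually exercises the encryption oracle so that the functionality constraint $f^0(m^0)=f^1(m^1)$ is not vacuous---otherwise $\Pi$ would leak the prefix length recovered by a queried key through the number of $\bot$ symbols in the returned $sk$. Once that point is nailed down, the remaining reduction to Theorem \ref{thm:pixi-msgpriv} is entirely routine.
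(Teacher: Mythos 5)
Your proposal takes essentially the same route as the paper: both argue that the validity condition $f^0(m^0)=f^1(m^1)$, together with the uniqueness of each $\kappa_q$ (Definition \ref{def:key-f}) and the fact that every other key yields $\vec\bot$, forces $\texttt{KeyGen}_{a,0}$ and $\texttt{KeyGen}_{a,1}$ to agree on every query, so the function-privacy game collapses to the message-privacy game and Theorem \ref{thm:pixi-msgpriv} applies. You are more explicit than the paper in two places --- you actually construct the reduction adversary $\mathcal{B}$ and you justify $f^0=f^1$ via output lengths rather than asserting it --- and your closing caveat about adversaries that make no encryption queries (rendering the validity constraint vacuous while $sk$ leaks $q$ through its $\bot$ count) is a genuine subtlety the paper's proof passes over in silence; but the underlying argument is the same.
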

\begin{proof}
Again, on grounds that the QPT adversary $\mathcal{A}$ is a valid function-privacy adversary (Definition \ref{def:vfpa}), 
$f^0_{\kappa_{q^0}}(m^0) = f^1_{\kappa_{q^1}}(m^1) \Rightarrow f^0_{\kappa_{q^0}} = f^1_{\kappa_{q^1}} \Rightarrow \kappa_{q^0} = \kappa_{q^1}$. \\
This means that $\forall q^0,q^1 \in [Q],  \texttt{KeyGen}_{a,0}(\kappa_{q^0},\kappa_{q^1}) = \texttt{KeyGen}_{a,1}(\kappa_{q^0},\kappa_{q^1})$ and the problem reduces to proving full message privacy, which has been proven (Theorem \ref{thm:pixi-msgpriv}). Also, for completeness, every key in $\{ 0,1 \}^\lambda \setminus \{ \kappa_q: q \in [Q] \}$ gives a secret $\vec \bot$, rendering the two $\kgen_{a,u}, u \in \{ 0,1 \}$ oracles same (in output this time). 
\end{proof}

\begin{theorem}[Weakly Simulation Secure]
\label{thm:pixi-weaksim}
The hFE scheme $(\Pi,\Xi)$ is weakly simulation-secure (under Definition \ref{def:wss}).
\end{theorem}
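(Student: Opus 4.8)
The plan is, for an arbitrary pair of polynomial-time algorithms $(\texttt{Msg},\texttt{Adv})$, to exhibit a simulator $\texttt{Sim}$ that runs $\texttt{Adv}$ internally on a decoy ciphertext and serves $\texttt{Adv}$'s \kgen queries by relaying them to the functionality oracle. Given $\tau$, $Q = F_a(\aleph,\vec m) = |\vec m|$, and oracle access to $F_a(\cdot,\vec m)$, the simulator first samples a fresh index $a' = (\sigma',\kappa'_Q) \Leftarrow \udist(\Sigma,\{0,1\}^\lambda)$, sets $(s'_1,\dots,s'_Q) = \sigma'(\kappa'_Q)$, and prepares the decoy $\ket{\vec d} \leftarrow \enc^{\mathcal{QE}}_{a'}(0^Q)$, retaining the bit $r_j$ that $\mathcal{QE}$ sampled at each position. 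It then runs $\texttt{Adv}^{\mathcal{O}}(\ket{\vec d},\tau)$, where on a query $k$ the oracle $\mathcal{O}$ calls $F_a(k,\vec m)$: on an empty answer it returns $sk = \vec\bot$, and on an answer $v_1v_2\cdots v_q$ it returns $sk = (s'_1\oplus v_1,\dots,s'_q\oplus v_q,\bot,\dots,\bot)$. Finally $\texttt{Sim}$ outputs $\alpha$ together with the ordered list of keys it relayed. As $\texttt{Sim}$ relays exactly the keys $\texttt{Adv}$ queries, the query components of the two output tuples are the same function of $\texttt{Adv}$'s view, so it suffices to prove that $\texttt{Adv}$'s view, hence $(\alpha, y_1,\dots,y_l)$, is computationally indistinguishable between the two worlds jointly with $(a,\vec m,\tau)$.

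Two routine points come first. (i) The patched key answers are internally faithful: decrypting the decoy with $sk = (s'_1\oplus v_1,\dots,s'_q\oplus v_q,\bot,\dots)$ returns $v_1\cdots v_q$ with probability $1$, because at a position $j\le q$ the first step of $\mathcal{QD}$ produces the bit $(s'_j\oplus v_j)\oplus s'_j\oplus r_j = v_j\oplus r_j$ and the second step then outputs $(v_j\oplus r_j)\oplus r_j = v_j$ --- this uses only $\mathcal{H}^\theta_u\ket v = \mathcal{H}^\theta_v\ket u$ (Theorem \ref{thm:xi-correct}) and $\mathcal{H}^\theta_u\ket b = \mathcal{H}^\theta_{u\oplus b}\ket 0$ (Corollary \ref{cor:h-zero}). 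So $\texttt{Adv}$'s ideal-world view is that of a genuine execution of $(\Pi,\Xi)$ under $a'$ on a plaintext agreeing with $\vec m$ on every coordinate probed by a relayed good key. (ii) Good keys are hit only negligibly often: of the $2^\lambda$ keys at most $O(Q)$ are good, and neither $\texttt{Adv}$ nor a transcript distinguisher obtains information correlated with $\sigma$, since the ciphertext --- whose angles $\theta_j = \frac{2\pi j}{Q}$ are public --- lets one extract the pad $\sigma(\kappa_Q)\oplus\vec m$ but nothing about $\sigma^{-1}$. Hence across $\texttt{Adv}$'s polynomially many queries every answer is $\vec\bot$ in both worlds except with negligible probability, and on the rare branch where a good key appears the two worlds are built to return mutually consistent decryptions.

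What remains, and where the real work lies, is indistinguishability of $\alpha$. The natural route is a $Q$-coordinate hybrid replacing $\mathcal{QE}_{s_j,\theta_j}(m_j)$ by $\mathcal{QE}_{s'_j,\theta_j}(0)$ one $j$ at a time, each step bounded by the semantic / IND-security of $\Xi$ (Theorem \ref{thm:xi-semsecure}) together with the key-patch of point (i), so that the joint (ciphertext, key-answers) view degrades only negligibly. The main obstacle is that, under the updated Definition \ref{def:wss}, the functionality index $a$, and hence $\sigma(\kappa_Q)$, now sits in the output tuple alongside $\vec m$, so the transcript distinguisher is strictly stronger than the IND-adversary of Definition \ref{def:ind-sec}: a real $\texttt{Adv}$ can read $\sigma(\kappa_Q)\oplus\vec m$ off $\ket{\vec c}$ and copy it into $\alpha$, and a distinguisher holding $(a,\vec m)$ can recompute and verify it, whereas $\texttt{Sim}$ holds neither $\vec m$ nor $\sigma(\kappa_Q)$, and relaying a good key $\kappa_q$ yields it at best a prefix $m_1\cdots m_q$ and still none of $\sigma(\kappa_Q)$. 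Closing the hybrid therefore seems to require either upgrading $\Xi$'s guarantee to quantum semantic security \emph{with auxiliary input}, so that whatever of $\sigma(\kappa_Q)\oplus\vec m$ surfaces in $\alpha$ is provably indistinguishable from noise even given $a$, or restricting the quantifier to non-degenerate $(\texttt{Msg},\texttt{Adv})$ for which no such pad-extraction is observable in the transcript. I expect this reconciliation step, rather than the construction of $\texttt{Sim}$ or the algebra of $\mathcal{H}^\theta_u$, to be the crux of the proof.
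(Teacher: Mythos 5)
Your route is genuinely different from the paper's, and your write-up is honest that it does not close: you build an explicit \texttt{Sim} (decoy ciphertext under a fresh index $a'$, key answers patched through the $F_a(\cdot,\vec m)$ oracle so that decrypting the decoy reproduces the functionality output) and plan a $Q$-step hybrid against the IND-security of $\Xi$, then stop at what you call the reconciliation step. The paper does none of this: it never constructs \texttt{Sim} at all, instead asserting that $\alpha^{\textsc{real}}_j$ and $\alpha^{\textsc{ideal}}_j$ are both equal to one deterministic map $\phi(\texttt{S}_a(y_j))$ of the key-generation answers --- the real-world half justified only by the sentence that IND-security means the ciphertext ``does not leak any information about the message to \texttt{Adv}'' --- and then applies a data-processing inequality to conclude the statistical distance is zero. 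So, as a proof of the theorem, your proposal is incomplete; the missing piece is precisely the hybrid you defer.

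However, the obstruction you isolate is real, and it sits exactly under the paper's unjustified sentence. Because $\mathcal{H}^{\theta}_0\ket{0}$ and $\mathcal{H}^{\theta}_1\ket{0}$ are orthogonal and the angles $\theta_j = \frac{2\pi j}{Q}$ are public, measuring the ciphertext pair $(\mathcal{H}^{\theta_j}_{r_j\oplus s_j}\ket{0},\mathcal{H}^{\theta_j}_{r_j\oplus m_j}\ket{0})$ in the basis $\{\mathcal{H}^{\theta_j}_0\ket{0},\mathcal{H}^{\theta_j}_1\ket{0}\}$ recovers $r_j\oplus s_j$ and $r_j\oplus m_j$, hence $s_j\oplus m_j$, with certainty. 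An \texttt{Adv} that copies $\sigma(\kappa_Q)\oplus\vec m$ into $\alpha$ is therefore not a function of the \texttt{KeyGen} answers alone, and the transcript distinguisher, which receives both $a$ and $\vec m$ in the output tuple, can verify the pad, while no \texttt{Sim} can reproduce it: $\sigma(\kappa_Q)$ is information-theoretically absent from the ideal world's inputs. This is consistent with IND-security of $\Xi$ (which hides $m$ from an adversary not holding $s$) but defeats the simulation claim once $a$ joins the transcript. So do not expect the paper to supply the idea you are missing --- at the point where your hybrid breaks down, the paper's sketch simply asserts the conclusion. Completing the argument requires either the strengthening of $\Xi$ or the restriction of the game that you suggest, and you should say explicitly that without one of these the theorem as stated is not established by your construction.
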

\begin{proof}[Sketch]
Intuitively, we will show that the output distributions of the adversary and simulator are statistically identical (i.e., having zero statistical distance). By virtue of the game, other distributions are same and proving the $\alpha$'s identical is sufficient to prove the real and ideal distribution tuples are statistically identical. \\
First, let us define the following (deterministic) function: \\
$\texttt{S}_a(k) := $ If $k = \kappa_q$ for some $q \in [Q]$, return the first $q$-bits of $\sigma(\kappa_Q)$. Otherwise return $\vec \bot$. \\
And let $\alpha = (\alpha_1,\alpha_2,...,\alpha_l)$ corresponding to key query distributions $( y_j )_{j \in [l]}$. The first thing to note is that the $\alpha_j$'s are distributions on $m_1m_2...m_q$ for $q \in \{ 0 \} \cup [Q]$, and these distributions are a deterministic map, say $\phi$, from $\texttt{S}_a(y_j)$. To see this in the real world game, first observe that since the quantum cipher-text is IND-secure, it does not leak any information about the message to \texttt{Adv}, and so $\alpha^{\textsc{real}}_j = \phi(\texttt{KeyGen}_a(y_j)) =  \phi(\texttt{S}_a(y_j))$. In the ideal world game, firstly, $F_a(\aleph, \vec m) = zQ$ for some integer $z$, leaking nothing additional about the chosen message vector. Also, from the view of comparing the distribution vectors, and since the $\vec m$ distributions are identical by definition, the queries of \texttt{Sim} to $F_a(y_j,\vec m)$ can be reduced to a function of $a$ and $y_j$ alone. That function is exactly $\alpha^{\textsc{ideal}}_j = \phi(\texttt{S}_a(y_j))$ since only one key $\kappa_q$ maps to one $m_1m_2...m_q$, $q \in [Q]$ and all other keys map to the null message. Now let's say there is a non-zero statistical distance $\Delta$ between $\alpha^{\textsc{real}}_j$ and $\alpha^{\textsc{ideal}}_j$. Then 
$0 < \Delta(\alpha^{\textsc{real}}_j,\alpha^{\textsc{ideal}}_j) = \Delta(\texttt{S}^{\textsc{real}}_a(y_j),\texttt{S}^{\textsc{ideal}}_a(y_j)) $ \\
$ \le \Delta(y_j, y_j)$ (as $\texttt{S}^{\textsc{real}}_a = \texttt{S}^{\textsc{ideal}}_a = \texttt{S}_a$, applying Theorem 7.6, \cite{statdist}) \\
$\Rightarrow 0 < \Delta(y_j, y_j)$. \\
Thus we have a contradiction. So $\forall j \in [l], \alpha^{\textsc{real}}_j = \alpha^{\textsc{ideal}}_j$, and the real and ideal distribution ensembles are statistically indistinguishable.
\end{proof}

\section{Discussion}

We see that the scheme $(\Pi,\Xi)$ permits an arbitrary, polynomial-sized stretch in the length of the message, given the security parameter. This is realizable due to the infinitely many cipher positions permissible on the Bloch-sphere equator. Note that this is not possible classically. \\
Also, there needs to be consideration on efficient representations of arbitrary injective functions from $\lambda$-bits to $Q$-bits, which in general have exponential size tables. Note that for the purpose of our scheme (where $[[x]]$ denotes an efficient representation of $x$), \\ 
$[[a]] := (\sigma(\kappa_Q), \kappa_1, \kappa_2, ..., \kappa_Q)$ should suffice (see Definition \ref{def:key-f}). \\
Finally, it's important to observe that the keys $k$ should be distributed only after instantiating $a$, that is, running \texttt{Setup}. This is true because under different $\sigma$'s, the same key $k$ does not necessarily give the same decryption.
\section{Related Work}

Over the past twenty years, there has been a lot of work done on quantum ciphers. Most of them encrypt pure quantum states as opposed to classical information. Beginning with private quantum channels \cite{pqc} whose optimality was proved later \cite{pqc-optimality}, quantum vernam ciphers were developed \cite{qe-vernam}. An optimal scheme based on quantum one-time pads \cite{qe-optimal} was proposed, later followed by characterisation of a one-way quantum encryption scheme \cite{qe-perfect} more general than private quantum channels. \\
Initially, Zhou proposed an algorithm to encrypt binary classical information \cite{qe-realqu}. Zhou then proposed qubit block encryption algorithms \cite{qbe-zhou,qbe-hk-zhou} which were later improved \cite{qe-improve-cao} to base their security on the BB84 protocol. Other than these, there have been parallel works on symmetric-key schemes \cite{qe-symmetric,qe-symm2}, a $d$-level systems' scheme \cite{qe-dlev}, and schemes based on modified BB84 \cite{qe-modbb84}, quantum key generation \cite{qbe-qkg}, conjugate coding \cite{qe-prob}, and quantum shift registers / hill cipher \cite{qe-qsrhc}. \\
Quantum public-key encryption (QPKE) schemes have been addressed by the community since the beginning of this century \cite{qpke-okamoto}. There have been QPKE schemes with information theoretic security proposed by Pan \cite{qpke-itsec-pan} and Liang \cite{qpke-itsec-liang,qpke-itsec2-liang}. Also, there have been QPKE schemes based on single qubit rotations \cite{qpke-qurot,qpke-qurot-sec} and classical NP-complete problems \cite{qpke-npc}. \\
More recently, quantum (fully) homomorphic encryption (Q(F)HE) schemes have been developed by the community. Liang proposed a perfectly secure QFHE scheme based on the quantum one-time pad \cite{qfhe-perfect}. This was followed by a QFHE scheme based on the universal quantum circuit \cite{qfhe-uqc}. Also, there has been a QHE scheme for polynomial sized circuits given by Dulek \cite{qhe-polyc}. \\
Very early, a KCQ (keyed communication in quantum noise) approach to cryptography \cite{kcq} was presented. Other works in the space of conventional quantum encryption include studies on optimality of quantum encryption schemes \cite{qe-optimality}, the use of quantum keys as opposed to classical keys \cite{qe-quantkeys}, and non-malleable ciphers \cite{nonmalleable}. Quantum secure direct communication \cite{qsdc-qe} and quantum key distribution \cite{qkd-via-qe} via quantum encryption have also been proposed. Finally, other notions of security \cite{qe-semsec-xiang,oram-sec} have been professed.
\section{Conclusions and Future Work}

In this work, we have introduced a novel one-qubit secret-key quantum encryption scheme for classical information. We have proved this scheme to have quantum semantic security, and quantum entropic indistinguishability as a function of the min-entropy of the message distribution. We have extended this scheme to permit recovery of different length subsequences of the message using different keys, under a new notion of positional secrecy. The resulting (hybrid) functional encryption scheme is proven to be full-message private, full-function private and weakly simulation-secure. \\
We hope to see the following improvements in the future, given the current status of our quantum-classical scheme: \\
\begin{enumerate}
\vspace*{-10pt}
\item Given that, under $\Xi$, encryptions of both $0$ and $1$ are statistically indistinguishable, perhaps there exists a proof of entropic security (Definition 2, \cite{entsec}) for the quantum scheme.
\item We recognize that the biggest drawback of the classical extension is that \kgen and \enc algorithms are only available via oracle calls, although that does not affect the security proofs. There could be a modification which permits making these algorithms public. 
\item We hope that this work motivates a new general definition for quantum functional encryption - fully quantum schemes which permit learning meaningful functions from encryptions of (general) quantum states.
\end{enumerate}

\subsubsection*{Acknowledgements}
We would like to thank Dr. Vinay J. Ribeiro for constructive discussions.


\end{document}